\newtheorem{lemma}{Lemma}
\newtheorem{theorem}{Theorem}
\newcommand{\pb}{p_{\mathsf{b}}}
\newcommand{\alphaL}{\alpha_{\mathsf{L}}}
\newcommand{\alphaN}{\alpha_{\mathsf{N}}}
\newcommand{\ISPhi}{I_{\mathbf{\Phi}}^{\mathsf{SI}}}
\newcommand{\IWPhi}{I_{\mathbf{\Phi}}^{\mathsf{WI}}}
\newcommand{\BPhi}{\mathbf{\Phi}}
\newcommand{\RB}{R_{\mathsf{B}}}
\newcommand{\zR}{z_{\mathsf{R}}}
\newcommand{\xR}{x_{\mathsf{R}}}
\newcommand{\yR}{y_{\mathsf{R}}}
\newcommand{\BR}{B_{\mathsf{R}}}
\newcommand{\jj}{\mathrm{j}}
\newcommand{\BL}{\mathsf{B}_{\mathsf{L}}}
\newcommand{\ddh}{d_{\mathsf{h}}}
\newcommand{\be}{\begin{eqnarray}}
\newcommand{\ee}{\end{eqnarray}}
\begin{document}
\hyphenation{multi-symbol}
\title{Location Based Performance Model for Indoor mmWave Wearable Communication}
\author{ Kiran Venugopal and Robert W. Heath, Jr.\\
\thanks{Kiran Venugopal and Robert W. Heath, Jr. are with the University of Texas, Austin, TX, USA. This work was supported in part by the Intel-Verizon 5G research program.}
}
\date{}
\maketitle

\vspace{-1cm}
\thispagestyle{empty}
\begin{abstract}
Simultaneous use of high-end wearable wireless devices like smart glasses is challenging in a dense indoor environment due to the high nature of interference. 
In this scenario, the millimeter wave (mmWave) band offers promising potential for achieving gigabits per second throughput. Here we propose a novel system model for analyzing system performance of mmWave based communication among wearables. The proposed model accounts for the non-isotropy of the indoor environment and the effects of reflections that are predominant for indoor mmWave signals. The effect of human body blockages are modeled and the system performance is shown to hugely vary depending on the user location, body orientation and the density of the network. Closed form expressions for spatially averaged signal to interference plus noise ratio distribution are also derived as a function of the location and orientation of a reference user.
\end{abstract}

\section{Introduction}

Research predicts that in five years, users will have three to eight electronic devices with different capabilities on and around their bodies \cite{IDC_forecast, Winter:2015}. These devices can work more efficiently if they are networked together to form what is called a wearable network.  For example, devices may be cheaper and have lower power requirements if they use a short-range connection to the smart-phone, instead of their own dedicated cellular connection. A major challenge in wearable networks is ensuring their efficient operation when there is a high density of users, as for example in public transport during rush hour. 

Since millimeter (mmWave) frequency bands have large bandwidths and good isolation features, they serve as ideal candidates to deliver high data rates in wearable networks. This is especially promising for dense indoor deployments in train cars and buses. To quantify the performance gains, analysis of mmWave based wearable network was reported in \cite{mmWave:2015} using the approach in \cite{torrieri:2012} that originally deals with the outage performance of finite ad-hoc networks in the non-mmWave frequency setup. In \cite{mmWave:2015, mmWave_arXiv:2015}, human bodies were modeled as the main source of signal blockage for mmWave based wearable networks as against buildings in the outdoor mmWave cellular setup in \cite{bai:2014}. Using the approaches in \cite{torrieri:2012,bai:2014}, spatially averaged closed form expression for the signal to interference plus noise ratio (SINR) coverage probability for a typical user located at the center of a dense but finite network region was derived in \cite{mmWave_arXiv:2015}. The main limitation in \cite{mmWave:2015, mmWave_arXiv:2015} is that the effect of wall and ceiling reflections for the metallic indoor environment was not explicitly modeled. The explicit effects of first order reflections from all the six faces of a cuboidal enclosure were considered for the simulation results in \cite{Geordie:mmWave}. This provided valuable insights about the nature of surface reflections in the indoor mmWave setup. While it was assumed that the reflections from the ceilings were never blocked and the self-body human blockage was characterized, \cite{Geordie:mmWave} does not report closed form analytic expressions for spatially averaged SINR performance of a typical user.

In this paper, we propose a system model that admits easy analysis while considering the significant effects of wall and ceiling reflections. We assume the users are drawn from a region of a Poisson Point Process (PPP) in ${\mathbb{R}}^2$  and derive closed form expressions for the spatially averaged SINR coverage probability of a reference user as a function of location within the enclosure. The effects of body orientation of the reference user and the density of the users are also characterized.

\begin{figure}
\centering
\includegraphics[totalheight=2.2in,width=3.3in]{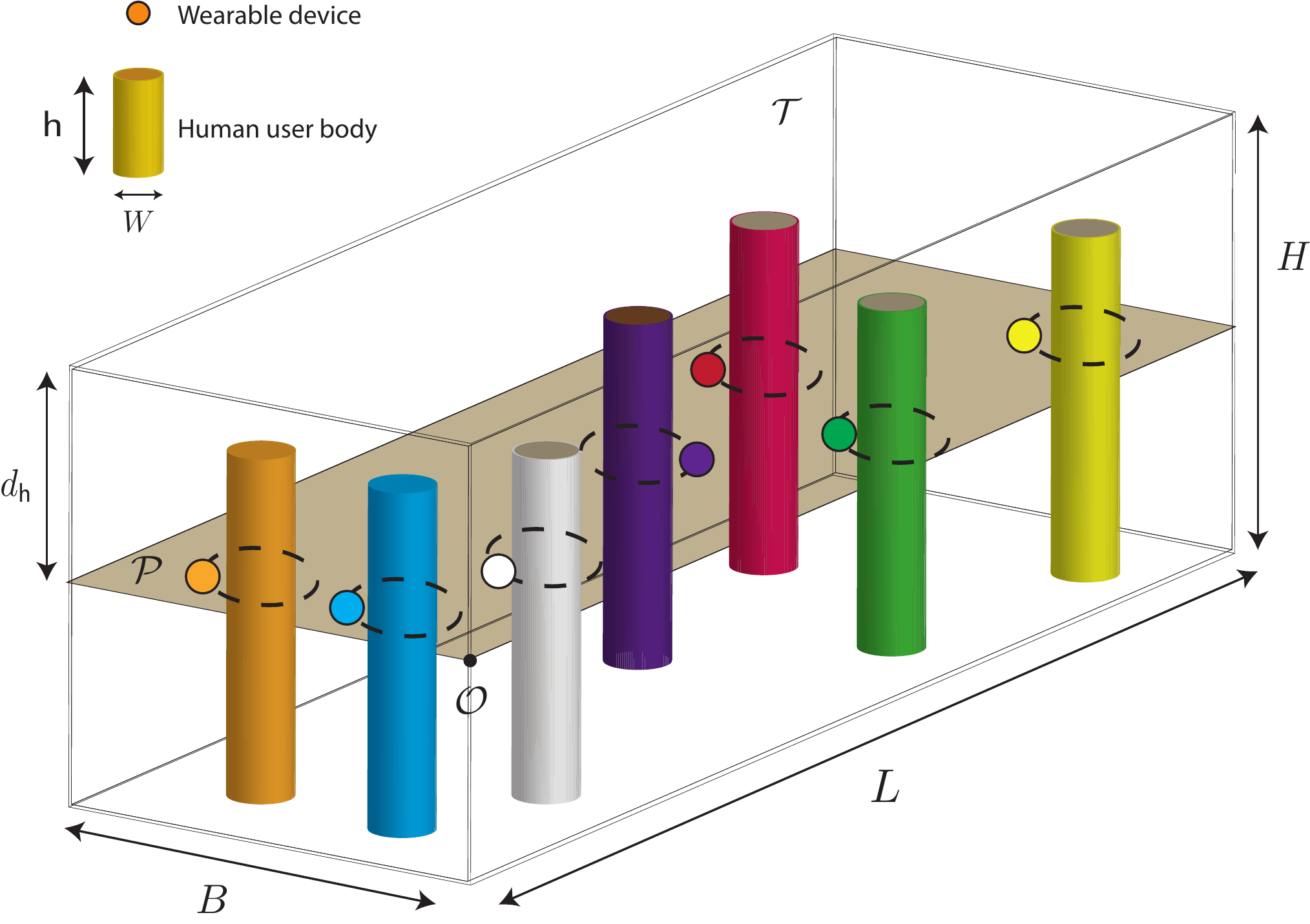}
\caption{Enclosed region $\mathcal{T}$ showing the random locations of the human users and their corresponding wearable devices positioned in the plane $\mathcal{P}$.}     
\label{fig:train_3D}        
\end{figure}

\section{Proposed System Model}
In this section, we explain the network geometry, blockage model and the propagation features assumed in the paper. Intuitive reasoning for the assumptions made are also provided at relevant places.

\subsection{Network Model}
We consider an enclosed space $\mathcal{T}$ of dimensions $L \times B \times H$ as shown in Fig. \ref{fig:train_3D} which has highly reflective walls and ceiling, and non-reflective floor. The users are assumed to be distributed randomly within $\mathcal{T}$ and are modeled as cylinders ${\mathcal{C}}_{\mathsf{u}}$ of a fixed diameter $W$. Each user is assumed to be equipped with one high-end wearable receiver and one controlling hub (smartphone) which acts as the device-to-device communication transmitter.

 All the transceiver devices are assumed to be positioned at a depth $\ddh$ from the ceiling of $\mathcal{T}$ along the plane denoted as $\mathcal{P}$ (Fig. \ref{fig:train_3D}) and positioned randomly on a circle of radius $d \geq W/2$ concentric with their associated user ${\mathcal{C}}_{\mathsf{u}}$. The locations of the users (centers of ${\mathcal{C}}_{\mathsf{u}}$ intersecting $\mathcal{P}$) are assumed to be drawn from a non-homogeneous PPP $\BPhi$ that has intensity $\lambda$ in the region of interest, and zero outside it. The reference receiver and its location are denoted by the complex number $\zR = \xR+\jj\yR$, where $\xR$ and $\yR$ are the real and imaginary parts, respectively. Similarly, the interfering transmitters and their locations are denoted as $z_i = x_i +\jj y_i$. In this representation, the point $\mathcal{O}$ as shown in Fig. \ref{fig:train_3D}, is assumed to be the origin. 

\subsection{Signal Model}
We assume the power gain $h_i$ due to fading for the wireless link from $z_i$ to the reference receiver is independent and identically distributed normalized Gamma random variable with parameter $m$. The reference signal link is assumed to be of a fixed length $d_0$ with a path-loss exponent of $\alphaL$, where the subscript $\mathsf{L}$ denotes line-of-sight (LOS). The non-LOS (NLOS) path-loss exponent is denoted as $\alphaN$, the relevance of which is explained momentarily. The fade gain $h_0$ of the reference link is also assumed to be a normalized Gamma distributed random variable with parameter $m$. The path-loss function $\ell(\zR,z_i)$ for the link from $z_i$ to the reference receiver depends on the relative position of $z_i$ with respect to $\zR$. The transmitters and the receivers are assumed to be equipped with omni-directional antennas, transmitting with a constant power $P$. The noise power normalized by the signal power observed at a reference distance is denoted as $\sigma^2$. 

\subsection{Modeling Interference and Blockages}
Human bodies are the major source of blockages for indoor mmWave communication \cite{Lu:ZTE, Bai:Asilomar14}. As in prior work \cite{mmWave:2015, Geordie:mmWave, mmWave_arXiv:2015}, we model the body blockage of a human as a diameter $W$ disk located in $\mathcal{P}$. The system model including blockage is shown in Fig. \ref{fig:train_2D}. The diameter-W disk associated with interferer $z_i$ and its location in $\mathcal{P}$ are denoted by the complex number $B_i$. Similarly, the body blockage associated with the reference user and its location are denoted as $\BR$. Further, user $i$ is assumed to be facing towards a direction $\psi_i$ relative to $z_i$, i.e., $\measuredangle \left( z_i - B_i\right) = \psi_i$. The reference receiver $\zR$ is assumed to be facing towards the direction $\psi_{\mathsf{R}}$. 
\begin{figure}
\centering
\includegraphics[totalheight=1.8in,width=3.3in]{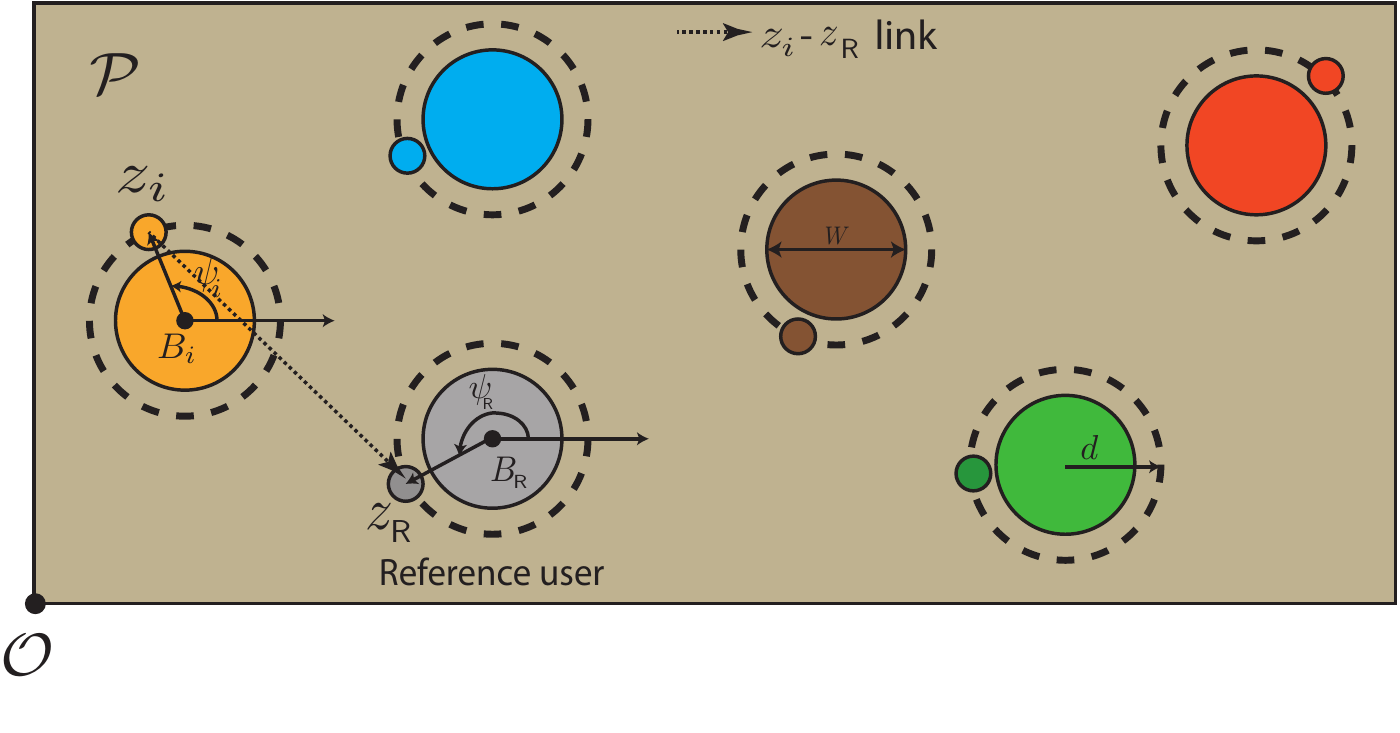}
\caption{2-D abstraction in $\mathcal{P}$ for modeling human body-blockage and user-body orientation representation used in this paper. Wearable devices are randomly located at a distance $d$ around the diameter-$W$ user body.}     
\label{fig:train_2D}        
\end{figure} The signal from an interferer $z_i$ to $\zR$ can be potentially blocked by $B_i$, $\BR$ and/or $B_j,~j\neq i$. The blockage by user $i$ and the reference (termed self-blockage in \cite{Bai:Asilomar14}) can occur irrespective of the locations of $z_i$ and $\zR$. Specifically, self-blockage depends on whether $z_i$ and $\zR$ are facing each other or not. The blockage by user $j \neq i$, on the other hand, depends on the relative separation between $z_i$ and $\zR$, and their individual positions with respect to the reflecting walls of the enclosure. For this reason, self-blockage and blockage by user $j \neq i$ are treated separately. 

\subsubsection{Blockage of $z_i$'s signal by user $j \neq i$}
To see if $B_j, ~j \neq i$ blocks $z_i$, we use the approach in \cite{mmWave:2015} and define a region $\mathcal{BC}(B_j) \in \mathcal{P}$ for each user relative to $\zR$,
\be
\nonumber \mathcal{BC}(B_j) \hspace{-0.1in}&=& \hspace{-0.1in}\left\lbrace z \in \mathcal{P} : | z - \zR|^2 \geq |B_j-\zR|^2 - \left(\frac{W}{2}\right)^2, \right. \\ 
|\measuredangle ( z \hspace{-0.1in}&&\hspace{-0.25in} \left.- \zR) - \measuredangle ( B_j - \zR) | \leq \sin^{-1} \left(\frac{W}{2|B_j-\zR|}\right)\right\rbrace, \label{Equation:blocking_cone}
\ee referred to as the \textit{blocking cone} of $B_j$. Note that, using \eqref{Equation:blocking_cone}, we can also define the blocking cone of the reference user which is denoted as $\mathcal{BC}(\BR)$. The concept of blocking cone is illustrated in Fig. \ref{fig:bc_zi}.
\begin{figure}
\centering
\includegraphics[totalheight=1.8in,width=3.3in]{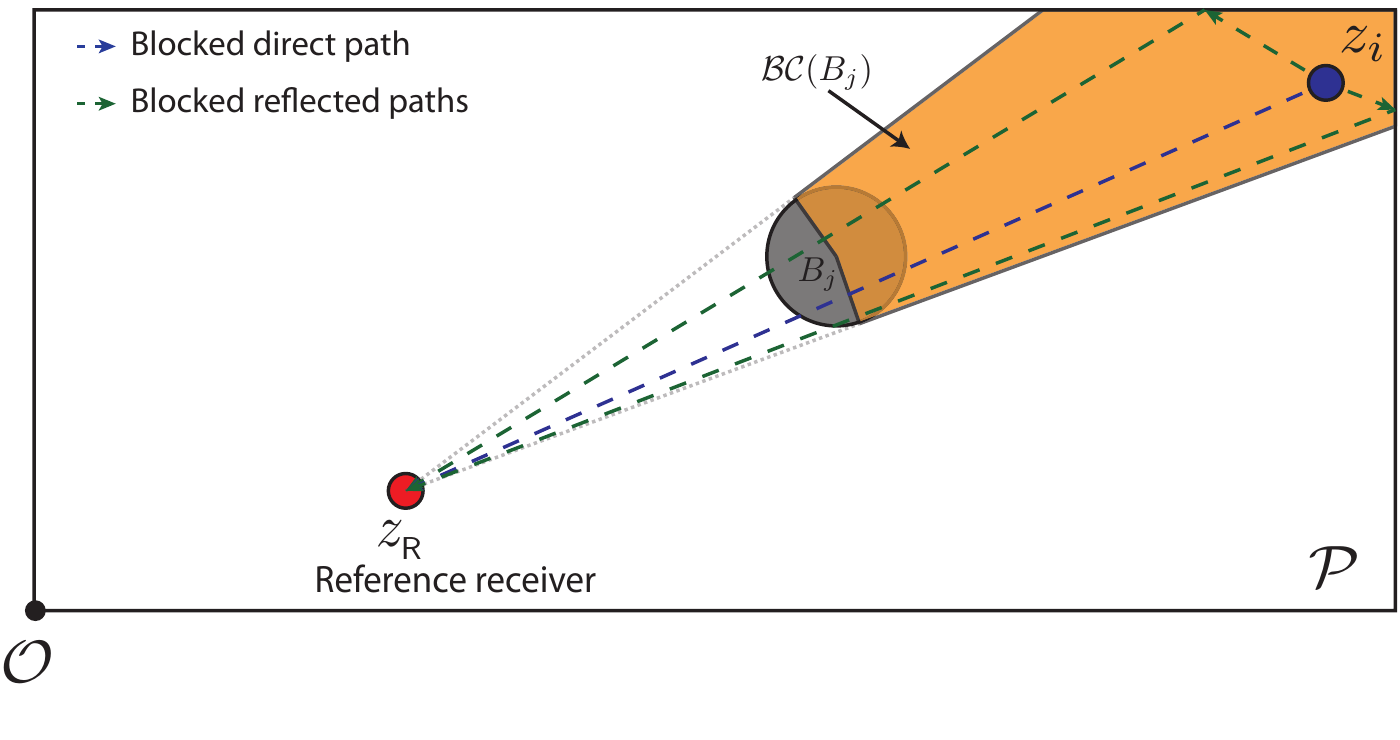}
\caption{Illustration of a blocking cone showing the direct and wall-reflected signal paths from $z_i$ getting blocked by $B_j$.}     
\label{fig:bc_zi}        
\end{figure}
We denote $z_i$ as \textit{strong} if its direct and wall reflected paths are not blocked by any user $j\neq i$. Otherwise, $z_i$ is denoted as a \textit{weak} interferer. Since $B_j$ is uniformly distributed in $\mathcal{P}$, the interfering transmitters located farther away from $\zR$ have a higher chance of being a weak interferer. For analytic tractability, we define a threshold distance $\RB(\zR)$ from the reference receiver such that if $|\zR-z_i| \leq \RB(\zR)$, $z_i$ is a strong interferer. Having $|\zR-z_i| > \RB(\zR)$ implies that there always exists some user $j \neq i$ that blocks the direct and wall-reflected propagation paths from $z_i$ to $\zR$. This threshold distance based model captures the blockage effects due to a third user $j$ for the interference signal from user $i$.

By definition, $z_i$ is a strong interferer whenever $|\zR-z_i| \leq \RB(\zR)$, and there exists no $B_j, j\neq i$ in the path from $z_i$ to $\zR$. Fig. \ref{fig:blocking_zone_center} shows an illustration of the blocking region $\mathcal{A}(\zR,z_i)$ that is used to check if $z_i$ is blocked by a user $j \neq i$, i.e., whenever $B_j \in \mathcal{A}(\zR,z_i)$, $z_i$ is blocked from $\zR$. Since the users are assumed to be drawn from $\BPhi$, the probability that there is no user in the region $\mathcal{A}(\zR,z_i)$ is $\exp(-\lambda |\mathcal{A}(\zR,z_i)|)$, where $|\mathcal{A}(\zR,z_i)|$ is the area of $\mathcal{A}(\zR,z_i)$. The shape of $\mathcal{A}(\zR,z_i)$ varies with $\zR$ (and $z_i$). In particular, the variation in the shape and hence the area $|\mathcal{A}(\zR,z_i)|$ is more pronounced when one or both of $\zR$ and $z_i$ are near the walls as shown in Fig. \ref{fig:blocking_zone_edges}. The effect of wall-reflections - which results in a near LOS signal propagation \cite{Geordie:mmWave} - is, however, higher when the receiver and/or the interfering transmitter are closer to the wall (Fig. \ref{fig:wall_reflections}). In a densely crowded environment, since the reflected interference signals need to propagate through a longer path, the probability that the onward and reflected paths for the interference bouncing off a wall are not blocked is higher. This means a different region $\mathcal{A}'(\zR,z_i)$ needs to be considered as illustrated in Fig. \ref{fig:blocking_zone_actual}.

\begin{figure}
\centering
\subfigure[Case when $z_i$ and $\zR$ are away from the reflecting walls. The wall-reflected paths have lengths far larger than the direct path length and hence are not considered.]{
\includegraphics[totalheight=1.6in,width=3.3in]{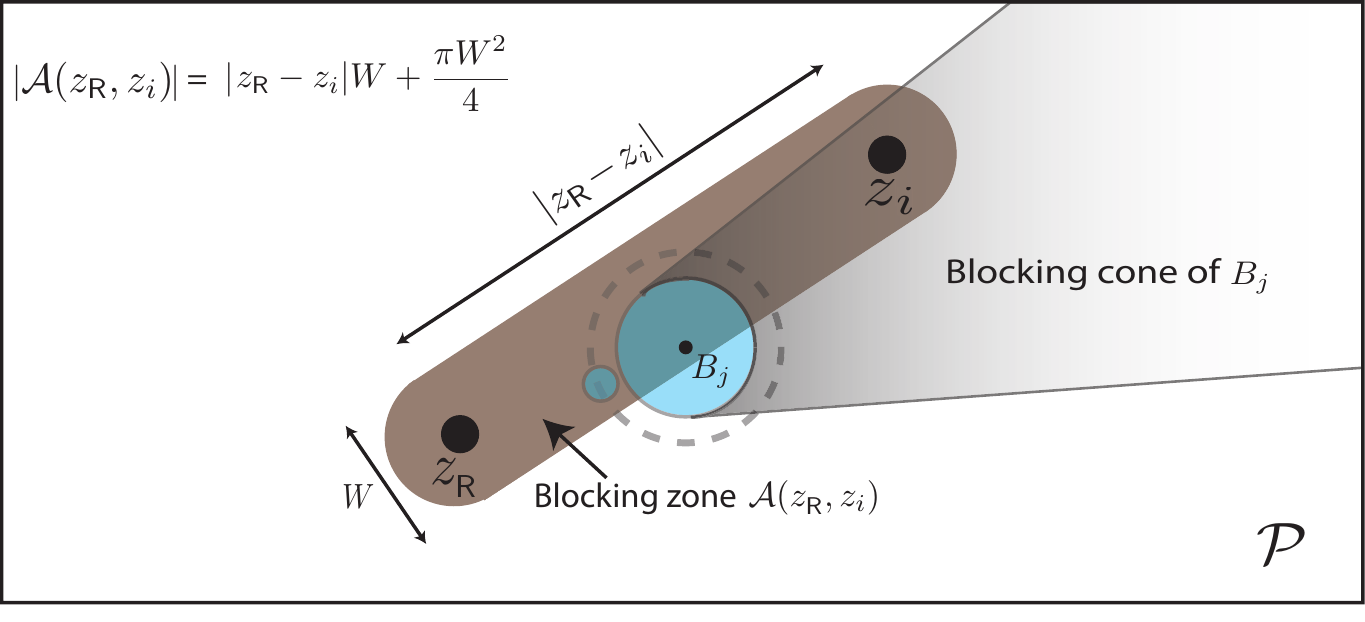}    
\label{fig:blocking_zone_center}        
}
\subfigure[Case when $z_i$ and $\zR$ are near the reflecting walls.]{
\includegraphics[totalheight=1.6in,width=3.3in]{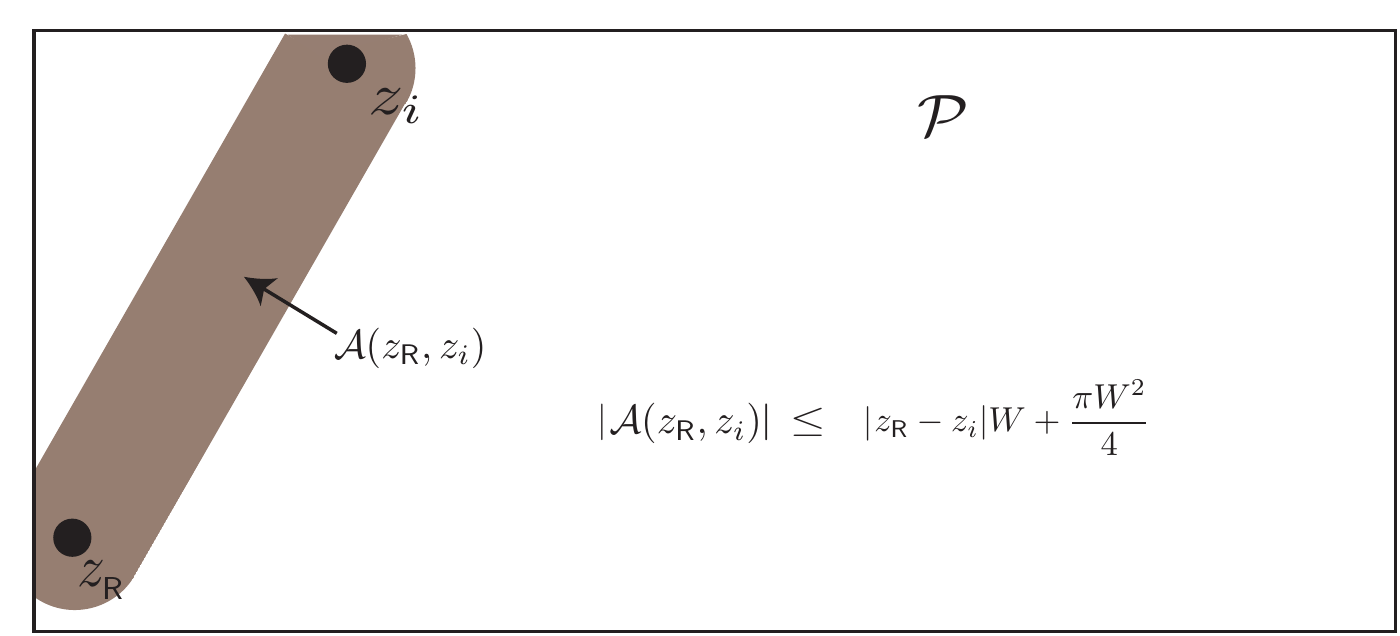}    
\label{fig:blocking_zone_edges}        
}
\caption{Figures showing the blocking zone $\mathcal{A}(\zR,z_i)$, a potential blockage $B_j$, and its blocking cone (cf \cite{mmWave:2015} for definition of blocking cone).}
\end{figure}

\textbf{Assumption 1:} The actual area of $\mathcal{A}'(\zR,z_i)$ can be approximated by the area seen by receiver-transmitter pair positioned away from the reflecting walls so that
\be
|\mathcal{A}'(\zR,z_i)| &\approx& |\zR-z_i|W + \frac{\pi W^2}{4}.
\ee 
\begin{figure}
\centering
\subfigure[The predominant 1$^{st}$ and 2$^{nd}$ order reflections when $z_i$ and $\zR$ are near the walls. Only those reflected paths whose path lengths are close to the direct path length are shown.]{
\includegraphics[totalheight=1.6in,width=3.3in]{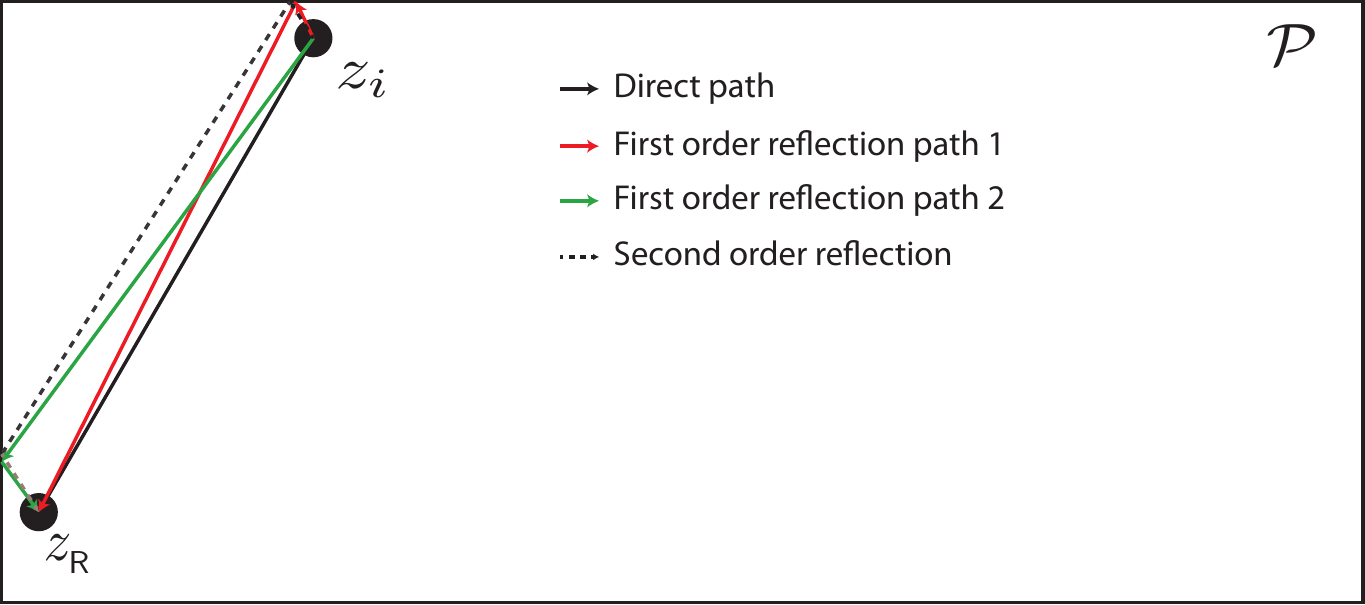}    
\label{fig:wall_reflections}        
}
\subfigure[The actual blocking zone $\mathcal{A}'(\zR,z_i)$ when $z_i$ and $\zR$ are near the reflecting walls.]{
\includegraphics[totalheight=1.6in,width=3.3in]{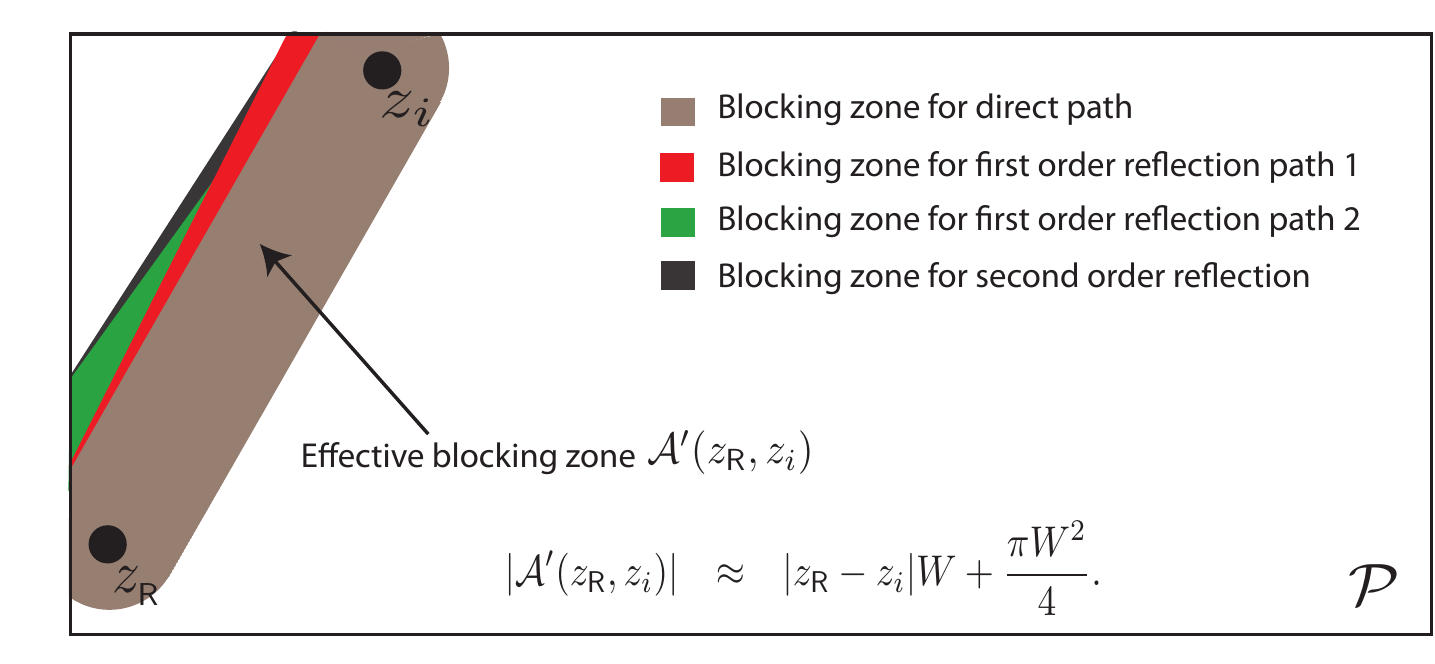}    
\label{fig:blocking_zone_actual}        
}
\caption{Figures showing the blocking zone $\mathcal{A}(\zR,z_i)$, a potential blockage $B_j$, and its blocking cone.}
\end{figure}
With this assumption, the blockage probability $\pb \left( \zR, z_i \right)$ of a user $i$ due to user $j \neq i$ is a function of only the separation between $\zR$ and $z_i$. This is evaluated as
\be
\pb \left( \zR, z_i \right) = 1-\exp\left(-\lambda \left(|\zR-z_i|W + \frac{\pi W^2}{4}\right)\right).
\ee We next evaluate the threshold distance $\RB(\zR)$ next. This is computed in such a way that the average number of interferers that are not blocked is preserved. The average number of strong interferers $\rho (\zR)$ as seen from $\zR$ is
\be
\rho(\zR) \hspace{-0.1in}&=& \hspace{-0.1in}\lambda \int_{z \in \mathcal{P}} (1-\pb (\zR,z)) \mathsf{d}z
\ee
The mean number of interferers in a disk of radius $\RB(\zR)$ around $\zR$ is $\lambda \pi \RB^2(\zR)$, so that equating the mean number of strong interferers leads to
\be
\RB(\zR) &=& \left[ \frac{\rho(\zR)}{\pi}\right]^{\frac{1}{2}}.
\ee We denote this disk around $\zR$ as $\mathcal{B}\left(\zR, \RB(\zR)\right)$. When $\zR$ is near the boundary of $\mathcal{P}$, parts of $\mathcal{B}\left(\zR, \RB(\zR)\right)$ lie outside $\mathcal{P}$. In such a scenario, given that the impact of reflections from the walls is significant, we continue to assume that $\mathcal{B}\left(\zR, \RB(\zR)\right)$ is a complete disk and allow $z_i$ to lie outside $\mathcal{P}$. This is tantamount to modeling the wall reflections as signals emanating from shadow transmitters located at the reflection image locations
corresponding to the actual strong interferers in $\mathcal{P}$. For the ease of analysis, we further assume that (reflection images and actual) strong interferers are independently and uniformly distributed within $\mathcal{B}\left(\zR, \RB(\zR)\right)$. This is illustrated in Fig \ref{fig:final_abstraction}. 

\begin{figure}
\centering
\includegraphics[totalheight=1.5in,width=3.2in]{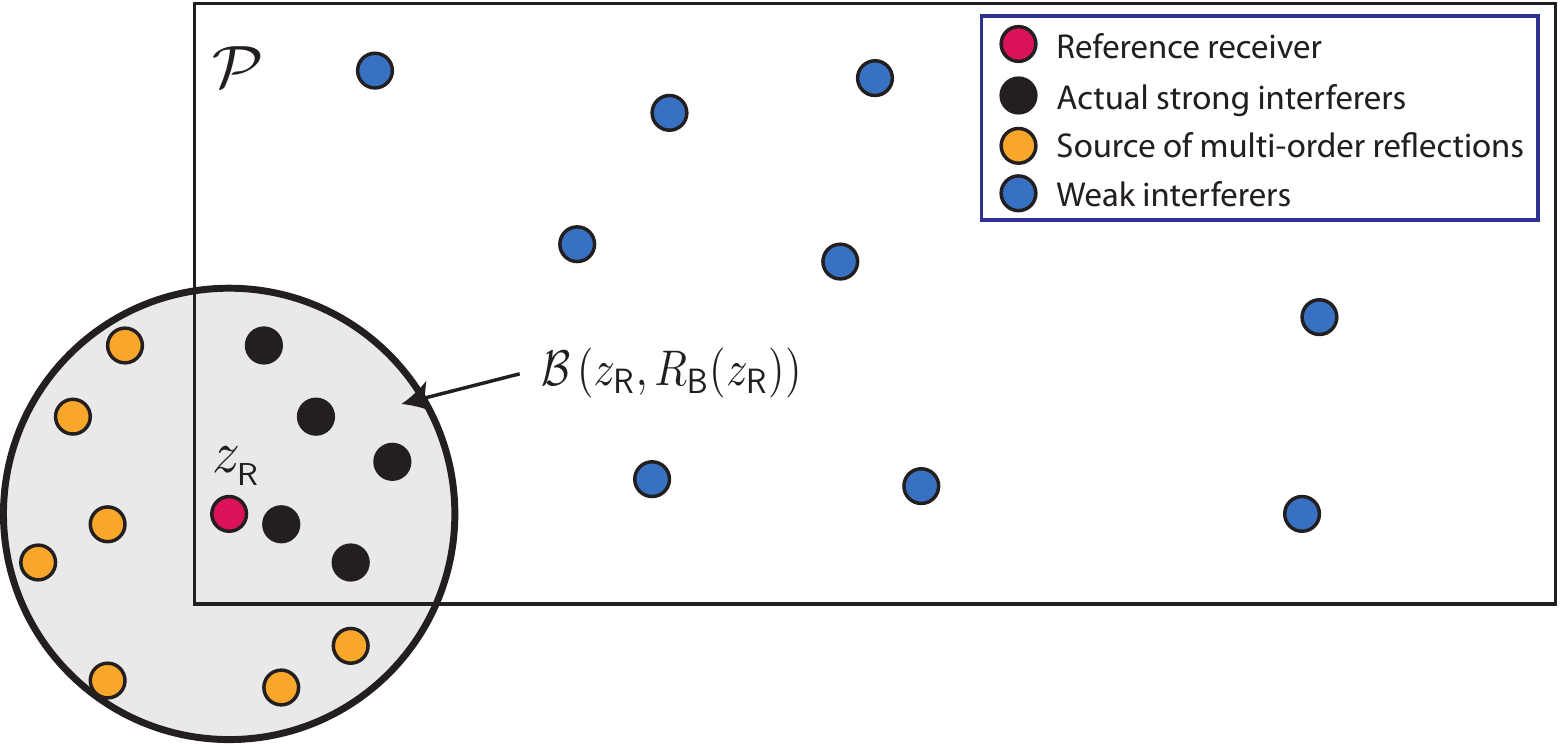}
\caption{Plot showing the region $\mathcal{B}\left(\zR, \RB(\zR)\right)$, when the reference user is near a reflecting wall.}     
\label{fig:final_abstraction}        
\end{figure}

\subsubsection{Self body-blockage}
Self-blockage of the $z_i-\zR$ link occurs if $z_i \in \mathcal{BC}(B_i)$ and/or $z_i \in\mathcal{BC}(\BR)$. This results in a constant attenuation of $\BL$ (linear scale), per number of self-blockages. Such a model has been used in \cite{Bai:Asilomar14} in the context of mmWave cellular systems. In \cite{Bai:Asilomar14}, self-blockage accounts for roughly 40 dB loss in SINR. Unlike the cellular case where the number of self-blockages in a link can be 0 or 1, in the mmWave wearables context, the number of self-blockages in a link can be either 0, 1 or 2. Self-blockage is particularly predominant for strong interferers as signals are otherwise never blocked. For a weak interferer, self-blockage further degrades the signal strength in addition to blockages due to other users. Hence, we assume the propagation is NLOS with a path-loss exponent $\alphaN > \alphaL$. If a weak interferer is not self-blocked, we assume the propagation path via ceiling reflection prevents the channel from being NLOS. 

We summarize our blockage based path-loss model for a general $z_i \in \mathcal{P}$ next. Denoting the number of self-blockages in the $z_i-\zR$ link as $s$ and using ${\mathbf{1}}_{A}$ to denote the indicator function of event $A$, the path-loss function $\ell(\zR,z_i)$ can be classified into any one of the following,

\begin{itemize}
\setlength{\itemindent}{0.25in}
\item[Case A:] When $z_i$ is a strong interferer,
$\ell(\zR, z_i) = {|\zR - z_i|^{-\alphaL}}{\BL^{-s}}.$
\item[Case B:] When $z_i$ is a weak interferer,
$ \ell(\zR, z_i) = \hspace{-0.1in}\left(|\zR - z_i|^2 + (2\ddh)^2\right)^{-\frac{\alphaL}{2}} {\mathbf{1}}_{\{s=0\}}+|\zR - z_i|^{-\alphaN}{\mathbf{1}}_{\{s \neq 0\}}.$
\end{itemize}
Note that we do not consider the reflection from ceiling in Case A. This is because when $z_i$ is a strong interferer located within close proximity to $\zR$ in $\mathcal{P}$, the signal bouncing off the ceiling is less significant in comparison to the direct and wall-reflected signals. The effect of reflections from the ceiling is assumed to be substantial only when the users are facing each other and when $|\zR-z_i| > \RB(\zR)$. The NLOS propagation in Case B-2 coarsely also accounts for all possible scattering and diffraction that dominates when an interferer is weak and self-blocked. This is the intuition behind the different cases for the path-loss function.

\section{SINR Coverage Probability}
\label{Sec:Cov_prob}
The SINR seen at the receiver $\zR$ when its body is facing in the direction $\psi_{\mathsf{R}}$,
\be
\Gamma(\zR,\psi_{\mathsf{R}}) = \frac{h_0 d_0^{-\alphaL}}{\sigma^2 + \sum_{i \in \BPhi} h_i \ell(\zR,z_i)}.
\ee The complementary cumulative distribution function (CCDF) of SINR, which is also referred to as the SINR coverage probability \cite{bai:2014}, is evaluated as
\begin{equation}
\nonumber \mathbb{P}\left( \Gamma(\zR,\psi_{\mathsf{R}}) > \gamma \right)= \mathbb{P}\hspace{-0.04in}\left(\hspace{-0.04in} h_0 > d_0^{\alphaL} \gamma\hspace{-0.04in}\left(\hspace{-0.04in} \sigma^2 + \sum_{i \in \BPhi}h_i \ell(\zR-z_i)\right)\hspace{-0.04in}\right) 
\end{equation}
\begin{equation}
\leq 1-\mathbb{E}_{\BPhi}\left[ \left(1 - e^{-m \tilde{m} \tilde{\gamma} \left(\sigma^2 + \sum_{i \in \BPhi}h_i \ell(\zR,z_i)\right)}\right)^m\right]. \label{Equation:GammaApprox}
\end{equation} In \eqref{Equation:GammaApprox}, we have used a tight lower bound for the CDF of normalized gamma random variable \cite{Alzer:1997}, with $\tilde{m} = {(m!)^{\frac{-1}{m}}}$, and $\tilde{\gamma} = \gamma d_0^{\alphaL}$. Denoting
\be
\ISPhi &=& \sum_{i \in \mathcal{B}\left(\zR, \RB(\zR)\right)}\hspace{-0.2in} h_i  \ell(\zR,z_i)~\mathrm{and} \\
\IWPhi &=& \sum_{i \in \BPhi \setminus \mathcal{B}\left(\zR, \RB(\zR)\right)} \hspace{-0.2in} h_i \ell(\zR,z_i), 
\ee and using the binomial expansion followed by splitting the strong and weak interference terms, we can write \eqref{Equation:GammaApprox} as
\be
\nonumber\mathbb{P}\left( \Gamma(\zR,\psi_{\mathsf{R}}) > \gamma \right) \hspace{-0.1in}&=&\hspace{-0.13in} \sum_{k = 1}^{m}\binom {m}{k} (-1)^{k + 1}e^{-k m \tilde{m} \tilde{\gamma}\sigma^2} \times \\
\hspace{-0.1in}&&\hspace{-0.1in}\mathbb{E}_{\BPhi}\hspace{-0.04in}\left[ e^{-k m \tilde{m} \tilde{\gamma}\ISPhi} \right] \hspace{-0.04in} \mathbb{E}_{\BPhi}\hspace{-0.04in}\left[ e^{-k m \tilde{m} \tilde{\gamma}\IWPhi} \right]\hspace{-0.05in}. \label{Equation:coverage_binom}
\ee Expectation terms in \eqref{Equation:coverage_binom} are as given in Theorem \ref{Theorem} , which makes use of the following lemma.
\begin{lemma}
\label{Lemma}
The probability $p_s$ that $z_i \in \mathcal{B}\left(\zR, \RB(\zR)\right)$ experiences $s$ human body (self) blockages is given by
\be
p_s &=& \begin{cases}
        (1-\pb^{\mathrm{self}})^2 & s=0 \\
        2\pb^{\mathrm{self}}(1-\pb^{\mathrm{self}}) & s=1 \\
        \left(\pb^{\mathrm{self}} \right)^2 & s=2
   \end{cases}, \label{Equation:ps}
\ee where $\pb^{\mathrm{self}} = \frac{1}{\pi}\sin^{-1}\frac{W}{2d}$, and the probability $q(\zR,\psi_{\mathsf{R}})$ that both $z_i$ and $\zR$ are facing each other when $z_i \in \mathcal{P}\setminus \mathcal{B}\left(\zR, \RB(\zR)\right)$ is given by
\be
q(\zR,\psi_{\mathsf{R}}) &=& \left(1-\pb^{\mathrm{self}}\right)\left(1-q_1(\zR,\psi_{\mathsf{R}})\right),
\ee where 
\be
q_1(\zR,\psi_{\mathsf{R}}) &=& \frac{|{\mathcal{BC}(\BR)} \setminus \mathcal{B}\left(\zR, \RB(\zR)\right)|}{LB - |\mathcal{P} \cap \mathcal{B}\left(\zR, \RB(\zR)\right)|}. \label{Equation:q1}
\ee
\end{lemma}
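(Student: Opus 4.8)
The plan is to view each self-blockage of the $z_i$--$\zR$ link as caused by one of two independent obstacles, the interfering user's own body $B_i$ and the reference user's body $\BR$, so that the number of self-blockages is $s=\mathbf{1}\{z_i\in\mathcal{BC}(B_i)\}+\mathbf{1}\{z_i\in\mathcal{BC}(\BR)\}$. I would then compute the probability of each indicator separately, using that the facing directions $\psi_i$ and $\psi_{\mathsf{R}}$ are drawn independently and uniformly on $[0,2\pi)$, independently of all positions, and that conditionally on being a strong (respectively weak) interferer the location $z_i$ is uniform over its feasible region.

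For the obstacle $B_i$, condition on the positions of $z_i$ and $\zR$. Since $\measuredangle(z_i-B_i)=\psi_i$ and $|z_i-B_i|=d$, the disk of radius $W/2$ modelling $B_i$ lies at distance $d$ from $z_i$ in the direction $\psi_i+\pi$, and hence subtends at $z_i$ a cone of half-angle $\sin^{-1}\frac{W}{2d}$ about that direction. This disk intercepts the segment joining $\zR$ and $z_i$ --- equivalently $z_i\in\mathcal{BC}(B_i)$ in the sense of \eqref{Equation:blocking_cone}, the radial side-condition there being automatically satisfied for interferers at range exceeding $d$ --- precisely when $\measuredangle(\zR-z_i)$ lies in that cone. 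As $\psi_i$ is uniform and independent of the positions, this has probability $\frac{1}{2\pi}\cdot 2\sin^{-1}\frac{W}{2d}=\pb^{\mathrm{self}}$, independently of $\zR$ and of where $z_i$ sits. The same argument at the $\zR$ end shows $z_i\in\mathcal{BC}(\BR)$ iff $z_i$ lies in the fixed cone emanating from $\zR$ with axis $\psi_{\mathsf{R}}+\pi$ and half-angle $\sin^{-1}\frac{W}{2d}$; when $z_i$ is uniform in the disk $\mathcal{B}(\zR,\RB(\zR))$ centred at $\zR$ the angle $\measuredangle(z_i-\zR)$ is uniform, so this again has probability $\pb^{\mathrm{self}}$. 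Since $\psi_i$ and $\psi_{\mathsf{R}}$ are independent (of each other and of the positions), the two indicators are independent $\mathrm{Bernoulli}(\pb^{\mathrm{self}})$ variables, hence $s\sim\mathrm{Binomial}(2,\pb^{\mathrm{self}})$, which is \eqref{Equation:ps}.

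For the second statement, ``$z_i$ and $\zR$ face each other'' is the event $s=0$, that is $\{z_i\notin\mathcal{BC}(B_i)\}\cap\{z_i\notin\mathcal{BC}(\BR)\}$, now under the conditioning $z_i\in\mathcal{P}\setminus\mathcal{B}(\zR,\RB(\zR))$. The first factor is still $1-\pb^{\mathrm{self}}$ by the $B_i$ computation above, which never used which region $z_i$ occupies. For the second factor the region is no longer symmetric about $\zR$, so I instead use that a weak interferer is uniform over $\mathcal{P}\setminus\mathcal{B}(\zR,\RB(\zR))$; hence $\Pr[z_i\in\mathcal{BC}(\BR)]$ equals the ratio of $|\mathcal{BC}(\BR)\setminus\mathcal{B}(\zR,\RB(\zR))|$ to $|\mathcal{P}\setminus\mathcal{B}(\zR,\RB(\zR))|$, using $\mathcal{BC}(\BR)\subseteq\mathcal{P}$ to write $\mathcal{BC}(\BR)\cap(\mathcal{P}\setminus\mathcal{B}(\zR,\RB(\zR)))=\mathcal{BC}(\BR)\setminus\mathcal{B}(\zR,\RB(\zR))$; since $|\mathcal{P}|=LB$, the denominator is $LB-|\mathcal{P}\cap\mathcal{B}(\zR,\RB(\zR))|$, so this ratio is exactly $q_1(\zR,\psi_{\mathsf{R}})$ of \eqref{Equation:q1}. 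Finally, $\psi_i$ is independent of the position of $z_i$, so the two factors multiply and $q(\zR,\psi_{\mathsf{R}})=(1-\pb^{\mathrm{self}})(1-q_1(\zR,\psi_{\mathsf{R}}))$.

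The step I expect to be the main obstacle is the first one: passing rigorously from the set-membership conditions in \eqref{Equation:blocking_cone} --- expressed relative to $\zR$ and involving the varying distances $|B_j-\zR|$ --- to the clean statement that a uniformly oriented body intercepts the link exactly when a uniform angle lands in an arc of width $2\sin^{-1}\frac{W}{2d}$, in particular justifying that the radial side-condition is immaterial for the interferers that actually contribute, and carefully tracking which quantities are random versus conditioned so that the independence invoked to assemble $p_s$ and to factor $q$ is legitimate. The area-ratio evaluation giving $q_1$ is geometrically routine once uniformity of the PPP's points within a region is used.
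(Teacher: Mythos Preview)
Your argument is correct and follows essentially the same route as the paper: treat the two self-blockage events (by $B_i$ and by $\BR$) as independent $\mathrm{Bernoulli}(\pb^{\mathrm{self}})$ trials---the first from the uniform facing direction $\psi_i$, the second from the circular symmetry of $\mathcal{B}(\zR,\RB(\zR))$ about $\zR$---to obtain the binomial law for $p_s$, and then for the weak-interferer case replace the symmetry argument at the $\BR$ end by the area ratio over $\mathcal{P}\setminus\mathcal{B}(\zR,\RB(\zR))$, which is exactly $q_1(\zR,\psi_{\mathsf{R}})$. Your write-up is more explicit about the angular geometry and the sources of independence than the paper's proof, but the structure and the key ideas coincide.
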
 

\begin{proof} 
Since the strong interferers are assumed to be independently and uniformly distributed in $\mathcal{B}\left(\zR, \RB(\zR)\right)$, the probability that $B_i$ blocks $z_i$ is $\pb^{\mathrm{self}}$. The probability that $z_i$ falls in the blocking cone of $\BR$ is also $\pb^{\mathrm{self}}$ as $\mathcal{B}\left(\zR, \RB(\zR)\right)$ is circular with $\zR$ located at its center. So, the probability that both $\BR$ and $B_i$ blocks $z_i$'s interference is $\left(\pb^{\mathrm{self}} \right)^2$, and the probability that neither user bodies block the interference is $(1-\pb^{\mathrm{self}})^2$. Finally, $p_1$ can be computed to satisfy $\sum_{s=0}^{2} p_s = 1$.

The evaluation of $q(\zR,\psi_{\mathsf{R}})$ in the second part of the Lemma is in similar spirit. The probability that a weak interferer is self-blocked due to its own user body is $\pb^{\mathrm{self}}$. The probability of self-blockage due to the reference user's body depends on whether or not the weak interferer lies in the blocking cone of $B_0$ in the region $\mathcal{P} \setminus \mathcal{B}\left(\zR, \RB(\zR)\right)$. As the weak interferers form a PPP of intensity $\lambda(LB - |\mathcal{P} \cap \mathcal{B}\left(\zR, \RB(\zR)\right)|)$, the probability of self-blockage of the weak interferers due to $B_0$ is given by $q_1(\zR,\psi_{\mathsf{R}})$ in \eqref{Equation:q1}. The region outside $\mathcal{B}\left(\zR, \RB(\zR)\right)$ in $\mathcal{P}$ being non-isotropic, this probability needs to be computed numerically for a given $\zR$ and $\psi_{\mathsf{R}}$. 
\end{proof}

\begin{theorem}
\label{Theorem} Denoting $\tilde{R} = \RB(\zR)$, and the region $\mathcal{P}\setminus\mathcal{B}\left(\zR, \RB(\zR)\right)$ as $\mathcal{Q}$ for simplicity,
\be
\mathbb{E}_{\BPhi}\hspace{-0.05in}\left[ e^{-k m \tilde{m} \tilde{\gamma}\ISPhi} \right]\hspace{-0.1in} &=&\hspace{-0.1in} e^{ -2\pi \lambda (\frac{{\tilde{R}}^2}{2}-\overset{2}{\underset{s = 0}{\sum}}p_s \overset{\tilde{R}}{\underset{0}{\int}}(1-(1+ \frac{k  \tilde{m}\tilde{\gamma}}{r^{\alphaL}\BL^{s}})^{-m}r \mathsf{d}r)} \label{Equation:meanSI_integral}\\
\mathbb{E}_{\BPhi}\hspace{-0.05in}\left[ e^{-k m \tilde{m} \tilde{\gamma}\IWPhi} \right] \hspace{-0.1in} &=& \hspace{-0.1in} e^{ -\lambda \left(q(\zR,\psi_{\mathsf{R}}) A_1 + (1-q(\zR,\psi_{\mathsf{R}})) A_2\right)}, \label{Equation:mean_NLOS}
\ee with \vspace{-0.37in}
\be
A_1 \hspace{-0.1in}&=& \hspace{-0.1in}|\mathcal{Q}| -\hspace{-0.1in}\underset{z \in \mathcal{Q}}{\int}\hspace{-0.05in}\left( 1 + \frac{k\tilde{m}\tilde{\gamma}}{({{|\zR-z |}^2+(2\ddh)^2})^{\frac{\alphaL}{2}}}\right)^{-m}\hspace{-0.2in} \mathsf{d}z, \label{Equation:A1}\\ 
\& ~A_2 \hspace{-0.1in} &=& \hspace{-0.1in}|\mathcal{Q}| -\hspace{-0.1in} \underset{z \in \mathcal{Q}}{\int}\left( 1 + \frac{k\tilde{m}\tilde{\gamma}}{{|\zR-z |}^{\alphaN}}\right)^{-m}\hspace{-0.1in} \mathsf{d}z. \label{Equation:A2}
\ee
\end{theorem}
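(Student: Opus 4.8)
The plan is to recognize both quantities as Laplace functionals of the (restricted) point process $\BPhi$ and to evaluate them with the probability generating functional (PGFL) of a PPP. Throughout, write $c = k m \tilde{m}\tilde{\gamma}$. The disk $\mathcal{B}\left(\zR,\RB(\zR)\right)$ carries the restriction of $\BPhi$ to that region and $\BPhi\setminus\mathcal{B}\left(\zR,\RB(\zR)\right)$ the restriction to its complement, so by the restriction/independence property of a PPP the two are independent; this is exactly what lets \eqref{Equation:coverage_binom} be written as a product of the two expectations, and it lets $\ISPhi$ and $\IWPhi$ be treated in isolation.

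For $\mathbb{E}_{\BPhi}\!\left[e^{-c\ISPhi}\right]$, I would condition on the point locations and factor $e^{-c\ISPhi}=\prod_{z_i\in\mathcal{B}}e^{-c h_i \ell(\zR,z_i)}$. The $h_i$ are i.i.d.\ normalized Gamma$(m)$, for which $\mathbb{E}_h[e^{-sh}]=(1+s/m)^{-m}$; taking $s=c\,\ell(\zR,z_i)$ and using $c=km\tilde m\tilde\gamma$ gives the per-point factor $\left(1+k\tilde m\tilde\gamma\,\ell(\zR,z_i)\right)^{-m}$. In Case A, $\ell(\zR,z_i)=|\zR-z_i|^{-\alphaL}\BL^{-s}$ with self-blockage count $s\in\{0,1,2\}$; since under the uniform-placement model the events $\{B_i\text{ blocks }z_i\}$ and $\{z_i\in\mathcal{BC}(\BR)\}$ are independent across $i$ and independent of the fading, I average over $s$ with the weights $p_s$ of Lemma \ref{Lemma}, so the per-point factor becomes $\sum_{s=0}^{2}p_s\left(1+k\tilde m\tilde\gamma\,r_i^{-\alphaL}\BL^{-s}\right)^{-m}$ with $r_i=|\zR-z_i|$. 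Applying the PGFL of the intensity-$\lambda$ PPP on the disk turns the expectation into $\exp\!\left(-\lambda\int_{\mathcal{B}}(1-f(z))\,\mathsf{d}z\right)$; passing to polar coordinates about $\zR$ contributes a factor $2\pi$ and leaves a one-dimensional radial integral, and using $\sum_s p_s=1$ together with $\int_0^{\tilde R}r\,\mathsf{d}r=\tilde R^2/2$ to split off the ``$1$'' term puts the exponent in the form \eqref{Equation:meanSI_integral}.

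For $\mathbb{E}_{\BPhi}\!\left[e^{-c\IWPhi}\right]$ the same conditioning/fading step applies, but the path-loss is now the Case~B mixture. A weak interferer is, with probability $q(\zR,\psi_{\mathsf{R}})$ of Lemma \ref{Lemma}, not self-blocked, in which case the ceiling-reflected path with $\ell=\left(|\zR-z|^2+(2\ddh)^2\right)^{-\alphaL/2}$ is used; with probability $1-q(\zR,\psi_{\mathsf{R}})$ it is self-blocked and NLOS with $\ell=|\zR-z|^{-\alphaN}$. After averaging the fading, the per-point factor at $z$ is $q(\zR,\psi_{\mathsf{R}})\left(1+k\tilde m\tilde\gamma\left(|\zR-z|^2+(2\ddh)^2\right)^{-\alphaL/2}\right)^{-m}+\left(1-q(\zR,\psi_{\mathsf{R}})\right)\left(1+k\tilde m\tilde\gamma|\zR-z|^{-\alphaN}\right)^{-m}$. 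Inserting this into the PGFL of the intensity-$\lambda$ PPP on $\mathcal{Q}=\mathcal{P}\setminus\mathcal{B}\left(\zR,\RB(\zR)\right)$ and writing $1-(\text{factor})$ as the $q$-weighted sum of the two terms $\bigl(1-(\cdots)^{-m}\bigr)$, the exponent separates as $q(\zR,\psi_{\mathsf{R}})\bigl(|\mathcal{Q}|-\int_{\mathcal{Q}}(\cdots)^{-m}\,\mathsf{d}z\bigr)+\bigl(1-q(\zR,\psi_{\mathsf{R}})\bigr)\bigl(|\mathcal{Q}|-\int_{\mathcal{Q}}(\cdots)^{-m}\,\mathsf{d}z\bigr)=q(\zR,\psi_{\mathsf{R}})A_1+\bigl(1-q(\zR,\psi_{\mathsf{R}})\bigr)A_2$, which is \eqref{Equation:mean_NLOS} with $A_1,A_2$ as in \eqref{Equation:A1}--\eqref{Equation:A2}.

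The genuinely delicate part is not any single integral but the probabilistic bookkeeping that makes the PGFL applicable: justifying the independence of $\ISPhi$ and $\IWPhi$ and of the per-interferer self-blockage states, and faithfully invoking the modeling conventions of Section~II --- that the strong interferers, including the wall-reflection image transmitters, are taken uniform on the \emph{entire} disk $\mathcal{B}\left(\zR,\RB(\zR)\right)$ even where it leaves $\mathcal{P}$, while the weak interferers live on $\mathcal{Q}\subseteq\mathcal{P}$. Once those are pinned down, the remaining work (Gamma Laplace transform, PGFL, change of variables) is routine. The polar reduction is what collapses \eqref{Equation:meanSI_integral} to a single radial integral; because $\mathcal{Q}$ is not rotationally symmetric about $\zR$, no such reduction is possible for $A_1$ and $A_2$, which is precisely why they remain as area-minus-integral expressions to be evaluated numerically for a given $\zR$ and $\psi_{\mathsf{R}}$.
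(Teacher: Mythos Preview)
Your proposal is correct and follows essentially the same route as the paper. The paper proves the weak-interferer identity by conditioning on the Poisson count $K$, applying the Gamma MGF, writing the per-point factor as the $q/(1-q)$ mixture, and then averaging over $K$; this is precisely the derivation of the PGFL that you invoke directly, and your treatment of the strong-interferer part (mixture over $s$ with weights $p_s$, then polar reduction on the disk) is the analogous computation the paper omits as ``along similar lines.''
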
 

\begin{proof}
Here we show the proof of the second part of the Theorem, and the first part can be derived along similar lines. Suppose there are $K$ number of weak interferers. Clearly, $K$ is Poisson distributed with mean $\lambda|\mathcal{L}|$, where $|\mathcal{L}| = LB - |\mathcal{P} \cap \mathcal{B}\left(\zR, \RB(\zR)\right)|$ is the area of the region in $\mathcal{P}$ outside the strong interferer ball. With $\ell = \left\lbrace \ell(\zR,z_i) \right\rbrace_{i=1}^{K}$ and $h = \left\lbrace h_i \right\rbrace_{i=1}^{K}$,
\be
\mathbb{E}_{\BPhi}\left[ e^{-k m \tilde{m} \tilde{\gamma}\IWPhi} \right] \hspace{-0.1in}&=& \hspace{-0.1in}\mathbb{E}_{K}\hspace{-0.05in}\left[\mathbb{E}_{\ell,h}\hspace{-0.05in}\left[\prod_{i=1}^{K}e^{-k m \tilde{m} \tilde{\gamma} h_i \ell(\zR,z_i)}\hspace{-0.03in}\right]\hspace{-0.03in}\right]. \label{Equation:Th_e1}
\ee Since $h$ are independent normalized gamma random variables, their moment generating functions can be used to expand \eqref{Equation:Th_e1} as 
\be
\hspace{-0.2in}\mathbb{E}_{\BPhi}\hspace{-0.05in}\left[ e^{-k m \tilde{m} \tilde{\gamma}\IWPhi} \right] \hspace{-0.13in}&=& \hspace{-0.13in}\mathbb{E}_{K}\hspace{-0.05in}\left[\mathbb{E}_{\ell}\hspace{-0.05in}\left[\prod_{i=1}^{K} \left( 1 + k\tilde{m}\tilde{\gamma}\ell(\zR,z_i)\right)^{-m}\right]\hspace{-0.03in}\right]. \label{Equation:Th_e2}
\ee Given $K$, the number of weak interferers falling in the blocking cone of $\BR$ is binomial distributed with parameter $q_1(\zR,\psi_{\mathsf{R}})$ given in \eqref{Equation:q1}. Further, the probability that the interferer is facing $\zR$ is $(1-p_b^{\mathrm{self}})$. Therefore, we can write the RHS of \eqref{Equation:Th_e2} as
\be
\nonumber \hspace{-0.1in}&&\hspace{-0.15in}\mathbb{E}_{K}\hspace{-0.05in}\left[\mathbb{E}_{\{z_i\}_{i=1}^K}\hspace{-0.05in}\left[q(\zR,\psi_{\mathsf{R}}) \left( 1 + \frac{k\tilde{m}\tilde{\gamma}}{({|\zR-z_i |}^2+(2\ddh)^2)^{\frac{\alphaL}{2}}}\right)^{-m} \right. \right .\\
\hspace{-0.1in}&&\hspace{-0.15in}~~+ \left. \left. (1-q(\zR,\psi_{\mathsf{R}}))( 1 + \frac{k\tilde{m}\tilde{\gamma}}{|\zR-z_i |^{\alphaN}})^{-m}\right]^K\right]. \label{Equation:Th_e3}
\ee Finally, since $\{z_i\}_{i=1}^K$ are independent and uniformly distributed in the region $\mathcal{Q}$ (as defined in Theorem \ref{Theorem}), \eqref{Equation:Th_e3} evaluates to the form given in \eqref{Equation:mean_NLOS} when averaged over $K$.
\end{proof}
Plugging the quantities in \eqref{Equation:meanSI_integral} and \eqref{Equation:mean_NLOS} into \eqref{Equation:coverage_binom}, the spatially averaged SINR coverage probability can be computed as function of $\zR$ and $\psi_0$. The spectral efficiency ${\mathcal{C}}(\zR,\psi_{\mathsf{R}})$ for a given SINR can be computed as $\log_2 \left(1 + \Gamma(\zR,\psi_{\mathsf{R}})\right)$. With the knowledge of the CCDF of SINR, the ergodic spectral efficiency $\mathbb{E}\left[ {\mathcal{C}}(\zR,\psi_{\mathsf{R}})\right]$ of the reference user's communication link can be evaluated as a function of the reference receiver location and orientation of its body.
\section{Simulation Results}
\label{Section:sim_results}
In this section, simulation and numerical results that shed insights into the proposed model are discussed. The parameters used for the results are summarized in Table \ref{Table:params}. 

To account for reflections in the simulation, phantom transmitters and user bodies are assumed to be located at the mirror image locations of the actual interferers and their corresponding user body. Then the approach in \cite{mmWave:2015} is used to determine if an interferer is blocked or not. Next, based on the event of self-blockage, the appropriate path-loss model is used. The CCDF of SINR is then obtained by averaging the result for several network realization with users drawn from a density $\lambda$ PPP within the enclosure.
\begin{table}
\caption{Default values of parameters used for simulation}
\label{Table:params}
\centering
\begin{tabular}{|c|c|c|}
\hline 
Parameter & Value & Description\\ 
\hline 
$L$ & 15 m & Length of the enclosure\\ 
\hline 
$B$ & 5 m & Breadth of the enclosure\\ 
\hline 
$H$ & 2.5 m & Height of the enclosure\\ 
\hline 
$\ddh$ & 1 m & Distance of devices from the ceiling\\ 
\hline 
$W$ & 0.45 m & Width of the human-body blockages\\ 
\hline 
$d$ & 0.325 m & Distance of wearable from the user\\ 
\hline
$\lambda$ & 1 m$^{-2}$ & Density of the human users\\ 
\hline
$d_0$ & 0.25 m & Length of the reference link\\ 
\hline
$\alphaL$ & 2 & Path-loss exponent of LOS link\\ 
\hline
$\alphaN$ & 4 & Path-loss exponent of the NLOS link\\ 
\hline
$m$ & 7 & Nakagami parameter for fading\\ 
\hline
$\BL$ & 40 dB & Attenuation due to self-body blockage\\ 
\hline
\end{tabular}
\end{table}
%
The expression derived in Section \ref{Sec:Cov_prob}, is validated against simulation and is shown in Fig. \ref{fig:SINRCCDF} for different values of $\lambda$. Here we assume the reference receiver is facing right, i.e., $\psi_{\mathsf{R}} = 0^o$. The analytic upper bound and the simulation results match. Further, Fig. \ref{fig:SINRCCDF} compares the performance for user densities of $0.5 \text{m}^{-2}$, $2 \text{m}^{-2}$, and $4 \text{m}^{-2}$. With higher user density, the SINR coverage probability reduces since the system becomes more interference limited. 
\begin{figure}
\centering
\includegraphics[totalheight=3.2in,width=3.2in]{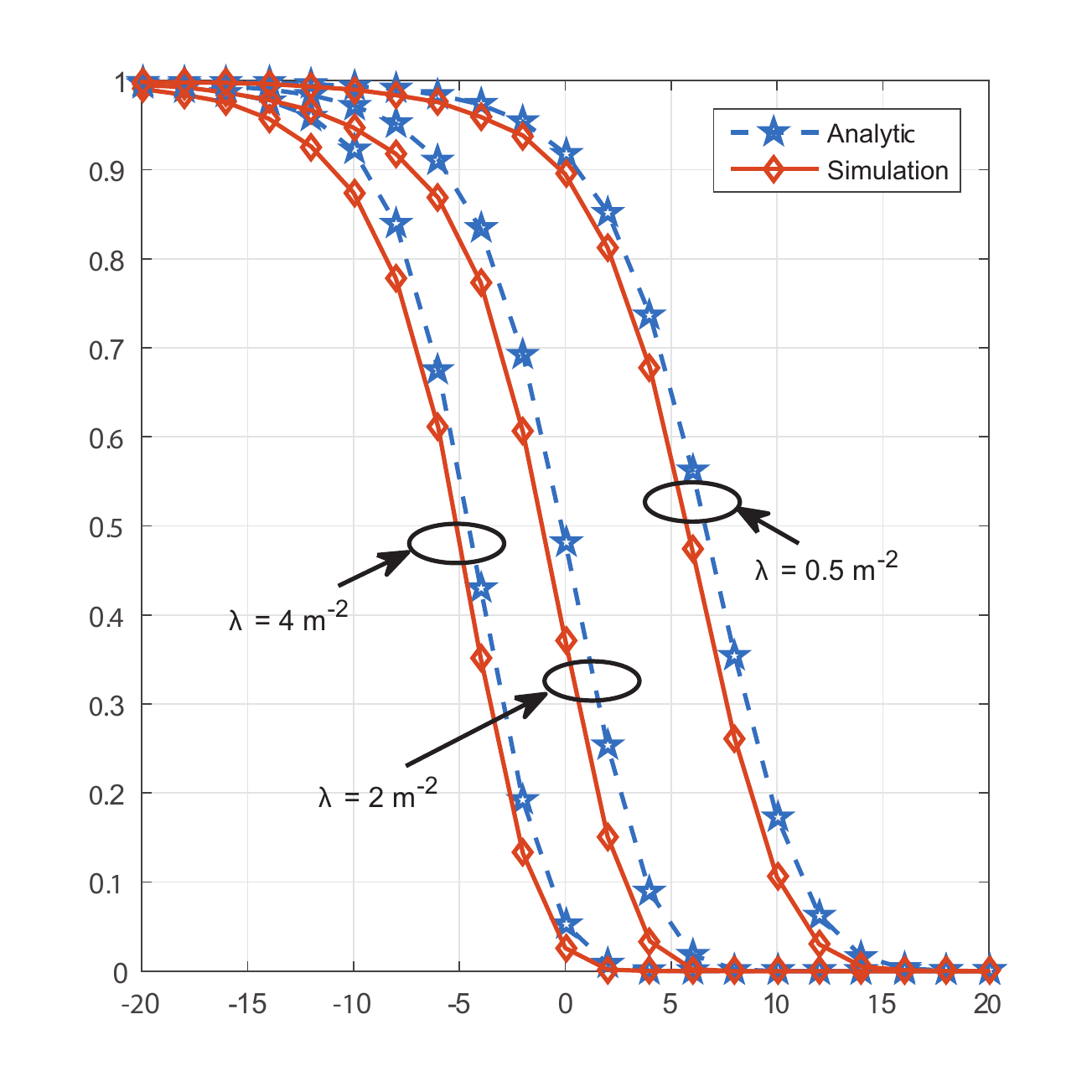}
\caption{SINR distribution obtained through simulation and analytic expression when the receiver is at the center for different values of user density $\lambda$.}     
\label{fig:SINRCCDF}        
\end{figure}

The dependence of system performance on the reference user body orientation is studied by plotting the average achievable rate for the cases when the reference user is at the center and near a corner (we assume $\zR = 0.5 + \jj 0.5$ for this case). This is computed by multiplying the ergodic spectral efficiency with the system bandwidth which is taken to be $1.76 ~\mathrm{GHz}$ assuming an IEEE 802.11ad like single carrier PHY setup. The plots are shown in Fig. \ref{fig:rate_orientation} from which we see that the sensitivity to body orientation is more pronounced when the reference user is at corner. Moreover, the system performance is better when the reference user is at a corner since interference is less pronounced that when at the center of the enclosed space.

\begin{figure}
\vspace{-0.1in}
\centering
\includegraphics[totalheight=3.2in,width=3.2in]{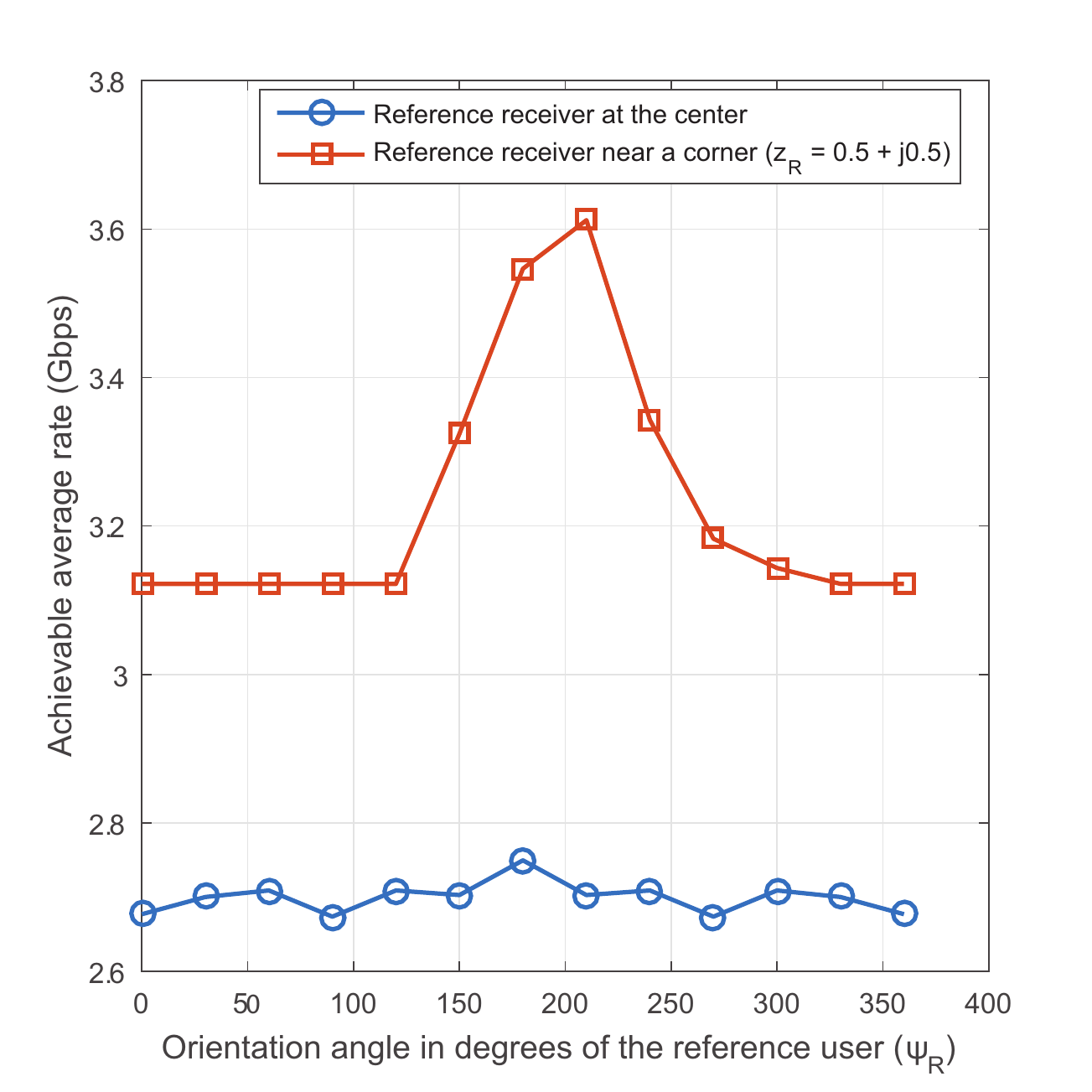}
\caption{Plot showing the variation in the average achievable rate as a function of the body orientation of the reference user when located at the center and near a corner.}     
\label{fig:rate_orientation}        
\end{figure}

The location dependent SINR coverage probability is shown in Fig. \ref{fig:heat_map} as a heat-map for an SINR threshold of $3~\mathrm{dB}$ and $\psi_{\mathsf{R}} = 180^o$. For this case, the best performance is obtained when the reference receiver is near the left wall and facing away from the interfering crowd. This is the case when all the interferers are shielded by the reference user's body.

\section{Conclusion}
In this paper, we proposed a tractable system model to capture the effects of body blockages, wall and ceiling reflections in a dense indoor mmWave wearable network. Closed form expressions were derived as a function of the location and body orientation of a reference user. The proposed model enables us to evaluate spatially averaged system performance without the need to conduct elaborate simulations. The key parameters involved in the model are a threshold distance from the reference receiver where the interferers are strong, and a probabilistic characterization of the self-body blockage. While the threshold distance is a function of the reference user's location, user density and the dimensions of the enclosure, self-blockage probability is a function of the users' relative body orientation.

It was observed that the effect of body orientation is significant when the reference user is located at a corner. The peak average rate for the system was obtained when the reference user is near the corner and facing away from the interferers. Further gains can be achieved by the use of directional antennas that results in directed transmission and reception. Using multiple antennas, the elevation angles of the antenna main-lobe can be leveraged to minimize interference. This study is saved for future work. It would also be interesting to see if tractable models can be derived when different wearable devices are located at different heights.

\begin{figure}[t]
\vspace{-0.8in}
\centering
\includegraphics[totalheight=3.2in,width=3.2in]{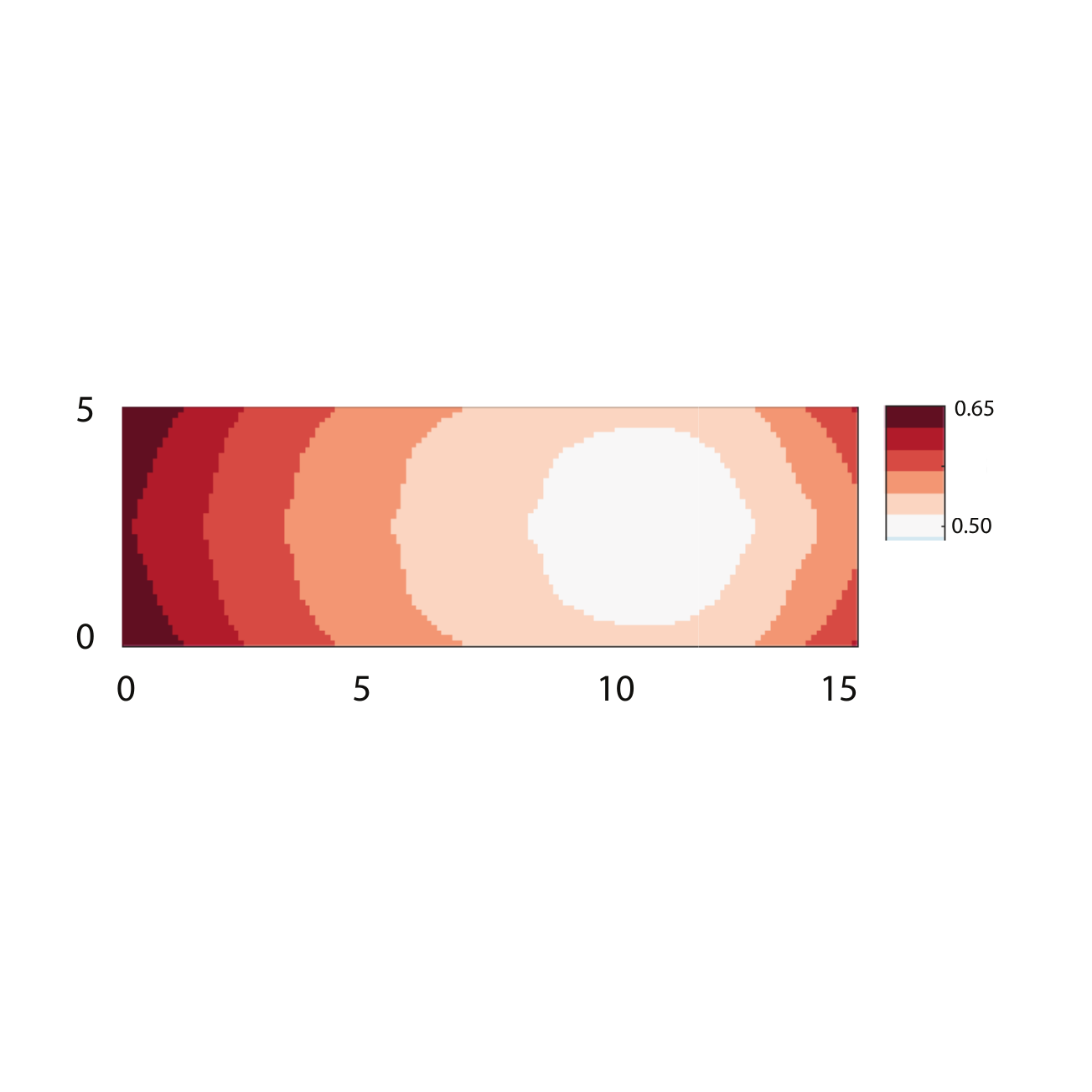}
\vspace{-1.2in}
\caption{SINR coverage probability heat-map as a function of the reference location position when the reference user is facing to the left, i.e. $\psi_{\mathsf{R}} = 180^o$.}     
\label{fig:heat_map}        
\end{figure}

\bibliographystyle{ieeetr}

\end{document}